\newtheorem{thm}{Theorem}
\newtheorem{lem}{Lemma}
\DeclareMathOperator*{\argmax}{argmax}
\begin{document}
\title{\LARGE \bf
A Bounded Multi-Vacation Queue Model for Multi-stage Sleep Control 5G Base station
}
\author{Jie Chen\thanks{Jie Chen is a postdoctoral associate with the physics department of Durham University, Newcastle and Durham Joint Quantum Centre (e-mail:jie.chen@durham.ac.uk)}}
\maketitle

\thispagestyle{empty}
\pagestyle{empty}

\begin{abstract}
Modelling and control of energy consumption is an important problem in telecommunication systems.To model such systems, this paper publishes a bounded multi-vacation queue model. The energy consumption predicted by the model shows an average error rate of 0.0177 and the delay predicted by the model shows an average error rate of 0.0655 over 99 test instances.Subsequently, an optimisation algorithm is proposed to minimise the energy consumption while not violate the delay bound. Furthermore, given current state of art 5G base station system configuration, numerical results shows that with the increase of traffic load, energy saving rate becomes less.
\end{abstract}

\section{INTRODUCTION}
Bounded multi-stage sleep mode control has emerged as a future implementation feature for energy efficient 5G networks.In this scheme, mobile devices hibernate gradually from light to deep sleep through a limited number of discrete stages and resume to work when there arrives a new workload or when they finish hibernation.To analyse this scheme, queuing theory is being referred to in this paper. Vacation queue system has been in discussion in literature for long. These systems are working upon the policy whether the number of packets in queue reaches a threshold or not ($N > 0$) or whether the vacation time has exceeded certain amount (T policy) ~\cite{doshisurvey}. In the following subsections, the feasibility of this new scheme and the supremacy of it against other options are being discussed. 

\subsection{The Merits of BMV-policy over other policies}
Investigations have demonstrated the convincing results that Bounded Multi-vacation Queue (BMV) schemes can beat N-policy and T-policy schemes in terms of system performance and reliability. As N-policy has only adjustable parameters of $K$ - the system buffer maximum quota, it has a bounded energy consumption rate and delay while BMV-policy can tune its $N_v$ (vacation amount limit) and $L_v$ (vacation length mean) across much wider ranges to guarantee an improved solution. Similarly T-policy can be treated as a single vacation(SV) policy, given a fixed $T$, results have shown that if being broken into multiple equally weighted vacations to make a BMV-policy scheme, the system would achieve a much smaller delay.
Figure \ref{fig:n-policy} coincides with paper ~\cite{npolicyJie} that power consumption level fluctuates and delay increases with the increase of N. In this particular simulation, $\lambda=550$, $\mu=1000$,$K=50$,$power_{on}=130$,$power_{off}=75$. Results show that though $\rho=0.55$ and $ratio_{power}=0.5769$, the normalised energy consumption per bit for N-policy regardless of which $N$ is selected, goes much higher than $0.6$. The system has not been saturated in terms of energy conservation efficiency. As delay is a traditional QoS metric of a network system, we might also want the new scheme would outperform N-policy scheme in terms of processing speed. Given a bounded delay for N-policy as $[D_{min},D_{max}]$,solutions with BMV-policy can be easily founded that match the design criteria that consumes less energy while falls within the delay bounds. Two of them are depicted on the figure as examples.
\begin{figure}[!htbp]
    \centering
    \includegraphics[width=0.98\textwidth]{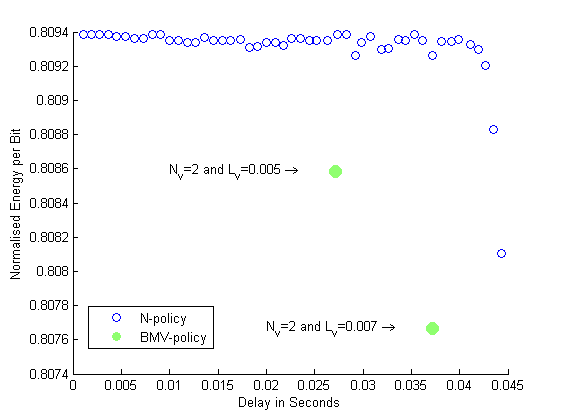}
    \caption{BMV-policy vs N-policy (Simulation)}
    \label{fig:n-policy}
\end{figure}
\FloatBarrier

Suppose in T-policy, the vacation length is $L_v$ and in BMV-policy, with the increase of $n$ (maximum number of vacations), $L^{BMV}_v=\frac{L_v}{n}$, cases where $n \in [1\ 7]$ are being executed and evaluated.Figure \ref{fig:t-policy}, shows that with the increase of $n$, the delay decreases while the energy level fluctuates.Based on the limited tested cases, $NE_{BMV}>NE_{T}$ with $NE$ stands for normalised energy while $D_{BMV}<NE_{T}$. Given a fixed delay bounds $[D_{min} \ D_{max}]$ imposed by N-policy, results show that BMV-policy can produce feasible solutions with higher energy savings.                                 

\begin{figure}[!htbp]
    \centering
    \includegraphics[width=0.98\textwidth]{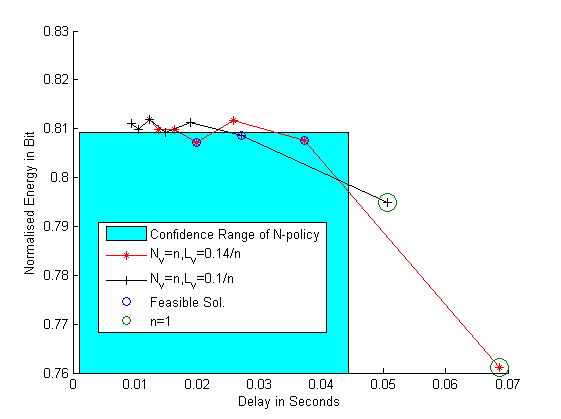}
    \caption{BMV-policy vs T-policy (Simulation)}
    \label{fig:t-policy}
\end{figure}
\subsection{Contribution}
\begin{itemize}
    \item The following work is the first to theoretically discuss multi-stage sleep control in the current state of art 5G Base station.
    \item It bridges the mathematically theoretical analysis and the practical engineering problem by the validation of software simulation. Previous works in theoretical queuing analysis rarely endeavor to go through thorough experimental tests. Neither have works in mobile engineering set forth to develop an analysis from brand new.
    \item It treats the system design problem as an optimal control problem considering the trade-off between delay and energy consumption and provides sound analysis against both of these system metrics. Most of the new queuing analysis are devised to evaluate the delay metrics solely and for those to have evaluated the cost metric such as power consumption in a typical telecommunication system,this work is the first to propose a validated model for future accurate prediction.  
    \item It pioneers in evaluating the performance of the multi-stage sleep control 5G system using the validated model. Moreover, it is the first to evaluate tentatively the selective service rate scheme given a fixed pool within such system.
\end{itemize}
\section{Problem Formulation}
In paper \cite{npolicyJie}, the system is perceived to rotate between sleep mode and working mode. Following this approach, the system performance is evaluated such that instead of me focusing on an equilibrium long term with system running a countably infinite time frame by simulation, the system running thread is composed of multiple running cycles. In this paper, by averaging over these running cycles, a particular uniform cycle is inspected that consist of a sleeping sub-frame and an active sub-frame, the statistically distributed measurements such as power consumption and delay are calculated and equated to those in longer term. 
\subsection{System Description}
The queuing system consists of an intelligent server that can vacate whenever the queue is empty. The vacation duration is adjusted based on two parameter configurations. They are the maximum vacation number $N_v$ and the average vacation period $L_v$. To be more specific, the queue once in vacation mode will return to the workstation whenever a vacation period expires. If the queue is still empty, it will continue to next vacation period till the maximum vacation number is reached. Otherwise it will resume to work upon its return to the workstation. Please refer to Figure ~\ref{fig:flow} for further illustration;
\begin{figure}[h]
    \centering
    \includegraphics[width=0.98\textwidth]{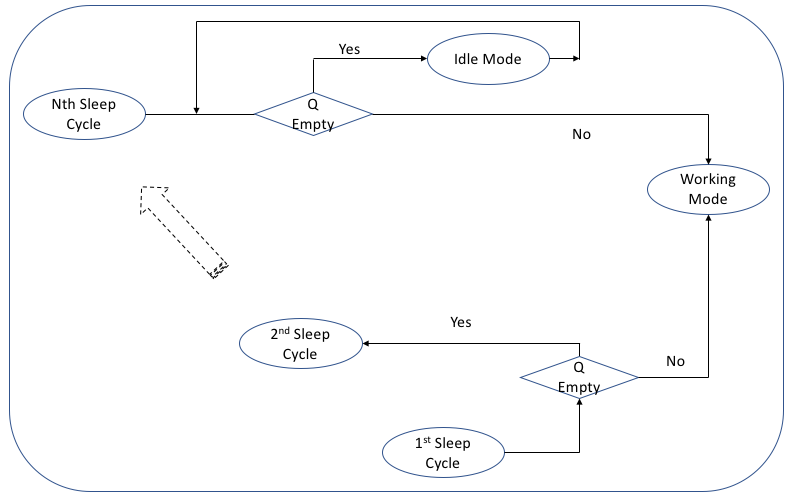}
    \caption{System Work Flow} 
    \label{fig:flow} 
\end{figure}
\FloatBarrier
The input traffic model follows a Poisson distribution with an average rate of $\lambda$ and the service pattern follows an Exponential distribution with an average rate of $\mu$. For the time being, the power that the sever uses at vacation is $p_s$, the vacation length is uniform over all stages and the power that the server uses at work is $p_a$.

\subsection{Power Consumption Analysis}
 Assume that the sleeping sub-frame has a length of $L_s$ and the working sub-frame has a length of $L_b$.
By design it is also assumed that the system starts with sleeping mode.

As there are at most $N_v$ sleeping periods, for the system to enter working mode after the first period, there must be at least 1 arrival during the first period. For the system to enter working mode after the $n$ period with $n \le N_v$, there will be at least 1 arrival during the previous $(n-1)th$ period but none happens during the previous $(n-2)$ periods.

Let A presents the event that there is at least 1 arrival within time frame $iL_v$; B presents the event that there is at least 1 arrival within time frame $(i-1)L_v$; C presents the event that there is at least 1 arrival within time frame $iL_v-(i-1)L_v$ only, which is equivalent to the phenomena that packets start to arrive during $i^{th}$ sleeping period. Henceforth $C=A-B$. Further on, $P(C)=P(A)-P(B)$. The formula is constructed as :
\begin{equation*}
    P_{L_s}(i)=\exp(-\lambda (i-1) L_v)-\exp(-\lambda (i) L_v)
\end{equation*}
Upon entering the working period, the system has a starting probability distribution of queue length which is $P_{init}=[p_k]\times K$ with $K$ is the maximum queue size.
$p_k=\exp(-\lambda L_v)  \frac{{(\lambda L_v)}^{k}}{k!}$


\begin{thm}
As the system probabilistically evolves from the initial distribution towards an approximately zero position dominated distribution such that the $P_{\mathrm{other}}(n)$ is approximate to 0. The summation of the $P_{\mathrm{zero}}(n)$ across all the stopping point is approximately to 1 as much as possible, $\forall \epsilon$, $\exists N$, when $n>N$, $1-\sum_k^{n} P_{\mathrm{zero}}(k) \le \epsilon$.
\end{thm}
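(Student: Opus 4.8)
The plan is to read the statement as the classical assertion that the working (busy) period of the finite-buffer queue terminates almost surely, i.e. that the first-passage distribution to the empty state is proper. First I would set up the queue-length process during the active sub-frame as a Markov chain on the state space $\{0,1,\dots,K\}$, started from the initial law $P_{init}=[p_k]$. I would interpret $P_{\mathrm{zero}}(n)$ as the probability that the chain reaches the empty state $0$ for the first time at the $n$-th step (a ``stopping point''), and $P_{\mathrm{other}}(n)$ as the complementary mass still sitting on the non-empty states $\{1,\dots,K\}$ at step $n$. Under this reading the two quantities are linked by $1-\sum_{k}^{n}P_{\mathrm{zero}}(k)=P_{\mathrm{other}}(n)$, so the claim reduces exactly to showing that the survival probability $P_{\mathrm{other}}(n)\to 0$.

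Next I would make state $0$ absorbing and partition the one-step transition matrix, writing the restriction to the transient block $\{1,\dots,K\}$ as a substochastic matrix $T$, so that $P_{\mathrm{other}}(n)=\boldsymbol{\pi}_0^{\top}T^{\,n}\mathbf{1}$ with $\boldsymbol{\pi}_0$ the initial law restricted to the non-empty states. The central step is to establish geometric decay of this quantity. Because the service rate $\mu>0$, from any state $i\ge 1$ a run of $i$ consecutive departures reaches $0$ with strictly positive probability, so state $0$ is reachable from every transient state within at most $K$ steps; I would let $\delta>0$ be a uniform lower bound on the probability of reaching $0$ within $K$ steps from any non-empty state.

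A geometric-trials argument then finishes it. Grouping the evolution into blocks of $K$ steps, the probability of surviving (remaining non-absorbed) through any one block is at most $1-\delta$, independently of where the block starts, by the Markov property; hence $P_{\mathrm{other}}(n)\le(1-\delta)^{\lfloor n/K\rfloor}$. Equivalently, and this is the spectral version of the same fact, reachability of $0$ forces the Perron root of $T$ to satisfy $r(T)<1$. Either way the $\epsilon$--$N$ statement follows mechanically: given $\epsilon>0$, choose $N$ so large that $(1-\delta)^{\lfloor N/K\rfloor}\le\epsilon$; then for every $n>N$ we have $1-\sum_{k}^{n}P_{\mathrm{zero}}(k)=P_{\mathrm{other}}(n)\le(1-\delta)^{\lfloor n/K\rfloor}\le\epsilon$, which is precisely the assertion.

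I expect the main obstacle to be twofold. The first difficulty is interpretational rather than technical: the statement as written is loose (``as much as possible'', ``approximately to $1$''), so the real preliminary work is to pin down that $P_{\mathrm{zero}}(n)$ is a first-passage increment and that the running sum is the absorption-by-time-$n$ probability, which is what makes $1-\sum_k^n P_{\mathrm{zero}}(k)$ genuinely equal to $P_{\mathrm{other}}(n)$. The second, and the only quantitative, difficulty is justifying $r(T)<1$ (equivalently a uniform $\delta>0$), i.e. ruling out any transient sub-class from which $0$ is unreachable. In the finite-buffer model this is automatic from $\mu>0$; if instead an unbounded buffer were intended, the same conclusion would require the stability condition $\rho=\lambda/\mu<1$ together with a Foster--Lyapunov drift argument in place of the Perron--Frobenius bound.
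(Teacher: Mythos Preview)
Your proposal is correct and, in fact, more complete than the paper's own argument, though the two routes differ. The paper proceeds by direct bookkeeping: it writes $P\mathrm{sum}_n=\sum_{k\le n}P_{\mathrm{zero}}(k)$, verifies inductively the identity $P\mathrm{sum}_n=1-\sum_{i\ge 1}\overline{P_i(n)}$ (exactly the link $1-\sum_k^n P_{\mathrm{zero}}(k)=P_{\mathrm{other}}(n)$ you identify), and then shows $P\mathrm{sum}_{\mathrm{other}}(n)=P\mathrm{sum}_{\mathrm{other}}(n-1)-\overline{P_0(n-1)}$, i.e.\ the sequence is monotone decreasing in $[0,1]$. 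From monotonicity alone the paper jumps to the $\epsilon$--$N$ conclusion, but a bounded monotone sequence need not tend to $0$; the missing ingredient is precisely what you supply via the reachability/geometric-trials bound (equivalently $r(T)<1$), which forces the limit to be $0$ rather than merely to exist. Conversely, the paper's explicit calculation gives the identity between the running sum and the residual mass without invoking first-passage language, which is what you flagged as your first interpretational obstacle; so the two approaches are complementary, with the paper handling the identity cleanly and your argument handling the convergence cleanly.
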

\begin{proof}
Assume the initial probability distribution is $P_0=[P_0(0) \ P_{\mathrm{other}}(0)]$ and $\overline{P}_0=[0 \ P_{\mathrm{other}}(0)]$, $[P_0(1) \ P_{\mathrm{other}}(1)]=\overline{P}_0*P_{\mathrm{tran}}$.\
\begin{equation*}
    \begin{aligned}
    P\mathrm{sum}_1 &=\sum_{i=0}^1 P_0(i)=P_0(0)+P_0(1)\\
                    &=\sum_{i=1}^K P_{\mathrm{other}}^i(0)-\sum_{i=1}^K P_{\mathrm{other}}^i(1)+P_0(0) \\
                    &=1 -\sum_{i=1}^K P_{\mathrm{other}}^i(1) \le 1
    \end{aligned}
\end{equation*}

\begin{equation*}
    \begin{aligned}
    P\mathrm{sum}_2 &=\sum_{i=0}^2 P_0(i)=P_0(0)+P_0(1)+P_0(2)\\
    &=1-\sum_{i=1}^K \overline{P1}_i +\sum_{i=1}^K \overline{P1}_i -\sum_{i=1}^K \overline {P2}_i\\
    &=1-\sum_{i=1}^K \overline {P2}_i
    \end{aligned}
\end{equation*}
 It can be intuitively derived that 
 \begin{equation*}
     \begin{aligned}
     P\mathrm{sum}_n &=1-\sum_{i=1}^K \overline {P(n)}\\
     &=1-P\mathrm{sum}_{\mathrm{other}}(n)
     \end{aligned}
 \end{equation*}
 
\begin{equation}
    \begin{split}
        \sum_{i=0}^K \overline {P_i(n)}&=\sum_{i=0}^K \sum_{j=1}^K \overline {P_j(n-1)}*P_{\mathrm{tran}}(j,i)\\
        &=1-\sum_{i=0}^K \overline {P_0(n-1)}*P_{\mathrm{tran}}(0,i)\\
        &=\sum_{i=1}^K \overline {P_i (n-1)}\\
        &=\sum_{i=0}^K \overline {P_i(n-1)}-\epsilon_{n}
    \end{split}
\end{equation}
where $\epsilon_{n}= \overline {P_0(n-1)}$. Henceforth, the following equation can be justified that $P\mathrm{sum}_{\mathrm{other}}(n)<P\mathrm{sum}_{\mathrm{other}}(n-1) \le 1$.
$P\mathrm{sum}_{\mathrm{other}}(i)$ is thus a monotonically decreasing sequence while $P\mathrm{sum}_n$ is a monotonically increasing sequence within the frame $[0,1]$.
Hence,given an $\epsilon$ as small as possible, there always exists an $N$ that $N=\argmax_i P\mathrm{sum}_{\mathrm{other}}(i)>\epsilon$, for $n>N$, $1-P\mathrm{sum}_n=P\mathrm{sum}_{\mathrm{other}}(n)<\epsilon$.  $\qedsymbol$ 

\end{proof}

Assume the transition matrix is $P_{\mathrm{tran}}$,the formation of $P_{\mathrm{tran}}$ for current $M/M/1/K$ queue system can be extended from paper \cite{npolicyJie}.  $P_{\mathrm{tran}}=[p_{i,j}]*[K \times K]$
\begin{equation}
    \begin{split}
        p_{i,j}=
        \begin{cases}
            &0  \\
            &\text{if $j<i-1$} \\
            &\int_0^{\infty} \mu exp(-(\lambda+\mu)t)\frac{(t\lambda)^{(j-i+1)}}{(j-i+1)!}\\
            &\text{if $(i-1)<=j<K$} \\
            &\sum_{j=K}^{\infty} \int_0^{\infty} \mu exp(-(\lambda+\mu)t)\frac{(t\lambda)^{(j-i+1)}}{(j-i+1)!}\\
            &\text{ if $j=K$}
        \end{cases}
    \end{split}
\end{equation}
$[P_{\mathrm{zero}}(k) \  P_{\mathrm{other}}(k)]=[0 \  P_{\mathrm{other}}(k-1)]*P_{\mathrm{tran}}$

\begin{equation} \label{equation1}
\begin{split}
E[L_b] & = E[E[L_b|{l_k=\sum_i^k x_i}]  \\
     & = \sum_k^{n>N} P_{\mathrm{zero}}(k) E[l_k]]\\
     &=\sum_k^{n>N} P_{\mathrm{zero}}(k)kE[x_i]\\
     &=\sum_k^{n>N} P_{\mathrm{zero}}(k)k\frac{1}{\mu}
\end{split}
\end{equation}
The special event that no arrival within the maximum number of vacation periods is analysed as below:
The period between the end of the overall sleeping sub-frame and the beginning of server running period is labeled as $ilen$ - the idle length.
As Poisson Distribution follows an individually independent Markovian pattern, $ilen$ is perceived as the inter-arrival time between the zero arrival and the first arrival minus the maximum overall sleeping sub-frame.
\begin{equation}
\begin{split}
E[L_i]&=E[L_a|N_{L_s}=0]\\
&-L_vN_v\\
&=\int_{L_vN_v}^{\infty} \frac{t\lambda exp(-t\lambda)}{exp(-(L_vN_v)\lambda)} \\
&- L_vN_v
\end{split}
\end{equation}

$L_a$ is the inter-arrival time for the first packet in the idle mode and $N_{L_s}$ is the number of arrivals within period $L_s$.

At the end of this inter-arrival time, the queue length probability distribution as $P_{\mathrm{init}}$ is $[0]\times K$ and $P_{\mathrm{init}}[1]=1$

Let the ratio $r=\frac{L_s}{L_b+L_s}$. Normalised energy per bit can be derived from $E_i=1-r+r*\frac{p_s}{p_a}$. $i<=N_v$ are the events where the server resume to work within the maximum amount of sleep frames. In these cases $L_s=i*L_v$ and $L_b+L_s=L_b+i*L_v$. Then $i=N_v+1$ is the event where the server has an idle stage between the sleeping sub-frame and the working sub-frame. In this case $L_s=N_v*L_v$ and $L_s+L_b=L_b'+ilen+N_v*L_v$. Lastly the event for $i>N_v+1$ doesn't exist. Let $NE$ be the acronym for normalised energy per bit, then.
\begin{equation}
\begin{split}
E[NE]&=\sum_i^{N_v} E_i*P_{L_s(i)} +\\ &E_{N_{(v+1)}}exp(-\lambda N_vL_v)
\end{split}
\end{equation}
where the probability of first arrival within $i^{th}$ vacation period $P_{L_s(i)}=\exp{-\lambda (i-1)L_v}-\exp{-\lambda i L_v}$.
\subsubsection*{Case Study}
$\mu=0.8$,$N_v=4$,$L_v=\{a|a=\frac{1}{L_v}=0.1+0.05*i,\ i\in [1,9] \text{and}\  i \in \mathrm{Z}\}$ 
From Figure ~\ref{fig:power}, it can be noticed that the analytical plots based from the above procedure have the similar curve as the simulation plots and the numerical values are pretty close to each other as the average error rate is $0.0177$ and deviation is $0.0102$ over 99 data instances.
\begin{figure}[!htbp]
    \centering
    \includegraphics[width=0.98\textwidth]{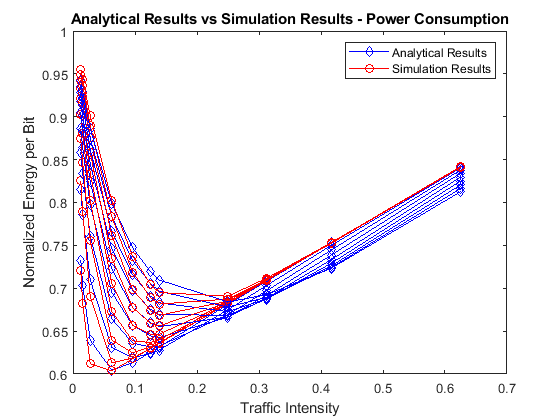}
    \caption{Normalised Power Analysis Validation}
    \label{fig:power}
\end{figure}

\FloatBarrier

\subsection{Waiting Time Analysis}
Waiting time analysis borrows Little's Theorem basic idea.The analysis is performed as decomposing the long term waiting time average for the system into two event cases: $A$ no arrival within the limited vacation time; $B$ no less than 1 arrival within the limited vacation time. It is easy to conclude that $P(A)=exp(-\lambda(N_vL_v))$ and $P(B)=1-P(A)$.

\begin{thm} \label{thm_2} For event $A$, the system is working as a $M/M/1/K$ system without any policy.\footnote{the theory has been similarly mentioned in literature already ~\cite{doshidecompo}. Here a more intuitive and alternative approach is presented.}
\end{thm}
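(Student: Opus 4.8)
The plan is to condition on event $A$ and show that, from the instant the server finishes its last scheduled vacation, the queue-length process coincides with the birth--death process of a plain $M/M/1/K$ queue. The only probabilistic machinery required is the memoryless property of the Poisson arrival stream together with the exponential service times; everything else reduces to verifying that the conditioned dynamics match the defining rules of an $M/M/1/K$ system with an always-available server.

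First I would note that under $A$ no arrival occurs in $[0,\,N_vL_v]$, so the server passes through all $N_v$ vacation periods and, having exhausted its vacation budget, remains at the workstation for the remainder of the cycle; no further vacation can be triggered. Hence the server is \emph{always available} from time $N_vL_v$ onward, which is exactly the no-policy regime. Next I would invoke memorylessness: because the Poisson process has independent increments, conditioning on ``no arrival before $N_vL_v$'' leaves the arrival stream on $(N_vL_v,\infty)$ a fresh Poisson process of rate $\lambda$, so the residual time to the first arrival (the idle length $ilen$) is again $\mathrm{Exp}(\lambda)$. When that first customer arrives the queue occupies state $1$ (consistent with $P_{\mathrm{init}}[1]=1$), and since service times are i.i.d.\ $\mathrm{Exp}(\mu)$ and memoryless, from this point the queue length rises at rate $\lambda$ up to the cap $K$ and falls at rate $\mu$ whenever it is positive, with arrivals to a full buffer lost. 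This is precisely the birth--death chain encoded by the transition matrix $P_{\mathrm{tran}}$ of the $M/M/1/K$ queue, so the two systems are statistically identical on $A$.

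The step I expect to be the main obstacle is justifying that conditioning on $A$ does not bias the subsequent evolution: a priori, forbidding arrivals on $[0,\,N_vL_v]$ could distort the law of what follows. I would close this gap by appealing to the independent-increments property of the Poisson process, which decouples the arrival history on $[0,\,N_vL_v]$ from that on $(N_vL_v,\infty)$. Once this decoupling is in place, the conditioned future is an unconditioned Poisson arrival stream feeding an always-on exponential server, and the identification with $M/M/1/K$ follows immediately. This recovers the decomposition result of \cite{doshidecompo} while furnishing the more direct sample-path justification promised in the footnote.
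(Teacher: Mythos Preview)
Your argument is correct and in fact cleaner than the paper's own treatment. The paper does not prove Theorem~\ref{thm_2} directly; it supports the claim through two auxiliary lemmas. Lemma~1 argues, via a Little's-theorem style computation, that the long-run waiting time equals the waiting time computed over a single ``averaged'' regenerative cycle. Lemma~2 then asserts that under event $A$ the ``timers'' of the arrival and departure processes are \emph{synchronised}---both effectively start at the instant the server completes its last vacation---so the Markov transition diagram from that point onward is the standard time-homogeneous $M/M/1/K$ diagram, and the classical equilibrium waiting-time formula applies. Your route bypasses the cycle-averaging lemma entirely and replaces the informal ``synchronisation'' language with the explicit probabilistic mechanism: independent increments of the Poisson stream guarantee that the post-$N_vL_v$ arrivals form a fresh rate-$\lambda$ Poisson process, and memoryless service together with an always-available server then reproduce exactly the $M/M/1/K$ birth--death dynamics. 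What the paper's decomposition buys is an immediate bridge to the steady-state waiting-time formula through the regenerative-cycle equivalence of Lemma~1; what your approach buys is a self-contained sample-path identification that makes the role of memorylessness transparent---precisely the ``more intuitive and alternative'' justification the footnote promises.
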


\begin{lem} The waiting time for vacation queuing system in general is equivalent to the waiting time for packets in an averaged running cycle.

\begin{proof}
By Little's theorem, in the long term, the overall packet-in-queue time summation to the number of overall in queue packets is the waiting time. By formula, $W=\lim_{t \to \infty} \frac{\gamma (t)}{\alpha (t)}$, where $\gamma(t)$ is the packet time summation up to time instance $t$ and $\alpha(t)$ is the in-queue packet number summation up to time instance $t$. The overall system time is consisting of infinite number of running cycles. Suppose for an averaged running cycle, the overall packet-in queue time summation is $\triangle \gamma_k$ and the overall in queue packet number is $\triangle \alpha_k$.
\begin{equation*}
    \begin{aligned}
   W&=\lim_{n \to \infty}\frac{\sum_{i=1}^{n} \triangle \gamma_i}{\sum_{i=1}^{n}\triangle \alpha_i}\\
    &=\lim_{n \to \infty} \frac{n\triangle \gamma_i}{n\triangle \alpha_i}\\
    &=\frac{\triangle \gamma_k}{\triangle \alpha_k}
    \end{aligned}
\end{equation*}
$\qedsymbol$ 
\end{proof}
\end{lem}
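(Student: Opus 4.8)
The plan is to prove the claim as a \emph{renewal--reward} statement layered on top of Little's theorem, exactly the tool the statement already invokes. The key structural observation is that the vacation queue is a \emph{regenerative} process: each running cycle begins at the instant the queue empties and the server enters its first vacation, and it closes at the next such instant. Because arrivals are Poisson and services are exponential, the system state at every regeneration epoch is identical (an empty queue with a server just starting to sleep), so successive cycles are independent and identically distributed. This is the fact that licenses replacing an infinite-horizon time average by a single ``averaged'' cycle.

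First I would recall Little's theorem in its pathwise, time-average form, $W=\lim_{t\to\infty}\gamma(t)/\alpha(t)$, where $\gamma(t)$ accumulates packet-in-queue time and $\alpha(t)$ counts the packets admitted up to time $t$. Next I would index the regeneration epochs $0=\tau_0<\tau_1<\tau_2<\cdots$ and define the per-cycle contributions $\triangle\gamma_k$ (the packet-time integral accrued during cycle $k$) and $\triangle\alpha_k$ (the packets admitted during cycle $k$); by the regeneration property each of these two sequences is i.i.d., with common means $E[\triangle\gamma]$ and $E[\triangle\alpha]$.

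The core of the argument is then to evaluate the ratio along the epochs, writing $\gamma(\tau_n)=\sum_{k=1}^{n}\triangle\gamma_k$ and $\alpha(\tau_n)=\sum_{k=1}^{n}\triangle\alpha_k$ and applying the strong law of large numbers to numerator and denominator separately, which gives $\gamma(\tau_n)/n\to E[\triangle\gamma]$ and $\alpha(\tau_n)/n\to E[\triangle\alpha]$ almost surely. Dividing, the cycle count $n$ cancels and the ratio converges to $E[\triangle\gamma]/E[\triangle\alpha]$, which is precisely the waiting time computed over one averaged running cycle. Together with Little's theorem this yields $W=E[\triangle\gamma]/E[\triangle\alpha]$, i.e.\ $\triangle\gamma_k/\triangle\alpha_k$ in the averaged sense, as claimed.

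The main obstacle is not the algebra but the analytic bookkeeping needed to make the limit rigorous. I must (i) confirm that the cycles genuinely regenerate, namely that the finite vacation budget $N_vL_v$ followed by the ensuing busy period terminates with an empty queue almost surely and has finite expected length, so the underlying renewal process is non-defective; (ii) control the \emph{residual} partial cycle when $t$ falls strictly between two epochs $\tau_n$ and $\tau_{n+1}$, showing that its contribution to both $\gamma(t)$ and $\alpha(t)$ is $o(t)$ and hence vanishes from the ratio; and (iii) verify $E[\triangle\alpha]>0$ and $E[\triangle\gamma]<\infty$ so the limiting ratio is well defined. Since the finite buffer $K$ forces every busy period to terminate almost surely with finite expected length, these three points close the gap and the stated equivalence follows.
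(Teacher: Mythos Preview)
Your proposal is correct and follows essentially the same route as the paper: invoke Little's theorem in the form $W=\lim_{t\to\infty}\gamma(t)/\alpha(t)$, decompose the sample path into running cycles, and replace the ratio of cumulative sums by the ratio of per-cycle averages. The paper carries this out informally---it simply writes $\sum_{i=1}^n\triangle\gamma_i=n\triangle\gamma_i$ and cancels the $n$---whereas you supply the underlying justification (regeneration, i.i.d.\ cycles, SLLN, control of the residual partial cycle, and the finiteness/positivity checks), so your version is a rigorous completion of the same argument rather than a different one.
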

\begin{lem}
In a no policy $M/M/1/K$ system, the timer of arrival process and departure process are synchronised. The Markov transition diagram can be drawn time-invariably and subsequently the classical equilibrium probability formula can be derived. The event of no arrival within vacation time falls into the category because the arrival process and departure process are synchronised.In this case, the Markov transition diagram starts when the server finishes vacation and embarks on idle period. Henceforth, the waiting time is $W=(\rho*(1+K*\rho ^{K+1}-(K+1)*\rho^K)/((1-\rho)*(1-\rho^{K+1})))*{\lambda}^{-1}$ with $\rho=\frac{\lambda}{\mu}$ and $K$ is the queue limit.
\end{lem}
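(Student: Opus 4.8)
The plan is to identify the stationary (equilibrium) distribution of the finite birth--death chain underlying the no-policy $M/M/1/K$ queue, use it to compute the mean number of packets in the system $L$, and then invoke Little's theorem (established in the preceding lemma) in the form $W = L/\lambda$ to obtain the closed-form waiting time. Because Theorem \ref{thm_2} guarantees that during event $A$ the system behaves exactly as a plain $M/M/1/K$ queue with synchronised arrival and departure timers, the transition diagram is time-invariant and the standard equilibrium machinery applies directly.

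First I would write the Markov transition diagram as a one-dimensional birth--death chain on states $\{0,1,\dots,K\}$, with up-rate $\lambda$ (an arrival) and down-rate $\mu$ (a departure). Imposing the detailed-balance (cut) equations $\lambda p_{n-1} = \mu p_n$ across every boundary gives $p_n = \rho^n p_0$ with $\rho = \lambda/\mu$. Summing the finite geometric series under the normalisation $\sum_{n=0}^{K} p_n = 1$ yields $p_0 = (1-\rho)/(1-\rho^{K+1})$, hence
\begin{equation*}
    p_n = \frac{(1-\rho)\rho^n}{1-\rho^{K+1}}, \qquad n = 0,1,\dots,K.
\end{equation*}

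Next I would compute the mean occupancy $L = \sum_{n=0}^{K} n p_n$. The key trick is to write $\sum_{n=0}^{K} n\rho^n = \rho\,\frac{d}{d\rho}\sum_{n=0}^{K}\rho^n = \rho\,\frac{d}{d\rho}\frac{1-\rho^{K+1}}{1-\rho}$, and differentiating the finite geometric sum gives $\sum_{n=0}^{K} n\rho^n = \rho\bigl(1-(K+1)\rho^{K} + K\rho^{K+1}\bigr)/(1-\rho)^2$. Multiplying by the prefactor $(1-\rho)/(1-\rho^{K+1})$ cancels one power of $(1-\rho)$ and produces
\begin{equation*}
    L = \frac{\rho\bigl(1 + K\rho^{K+1} - (K+1)\rho^{K}\bigr)}{(1-\rho)(1-\rho^{K+1})}.
\end{equation*}
Applying Little's theorem in the form $W = L/\lambda$ then reproduces exactly the stated expression.

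The main obstacle I expect is not conceptual but bookkeeping: the differentiation and simplification of the finite geometric sum must be carried through carefully so that the boundary terms $(K+1)\rho^{K}$ and $K\rho^{K+1}$ emerge with the correct signs, and one must confirm that precisely one factor of $(1-\rho)$ survives in the denominator. A secondary point worth flagging is the divisor in Little's theorem: a strict application uses the effective accepted rate $\lambda_{\mathrm{eff}} = \lambda(1-p_K)$, whereas the claimed formula divides by the raw arrival rate $\lambda$. I would note this interpretation explicitly and lean on the averaged-running-cycle version of Little's theorem from the previous lemma to justify using $\lambda$, so that the final identity matches the asserted $W$ term for term.
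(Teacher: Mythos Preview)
The paper does not supply a separate proof for this lemma; the statement itself is written as a compressed sketch (``the Markov transition diagram can be drawn time-invariably and subsequently the classical equilibrium probability formula can be derived'') and simply quotes the resulting closed form. Your proposal is exactly the standard derivation that unpacks that sketch: detailed balance on the finite birth--death chain, the geometric normalisation for $p_n$, the differentiation trick for $\sum n\rho^n$, and Little's theorem via the preceding lemma. The algebra you describe is correct and lands on the stated $W$ term for term, so this is precisely what the paper expects the reader to fill in.

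Your flag about $\lambda$ versus $\lambda_{\mathrm{eff}}=\lambda(1-p_K)$ is well taken and is in fact sharper than anything the paper says. The paper's formula divides by the raw $\lambda$, which makes $W$ the mean number in system over the \emph{offered} rate rather than the accepted rate; your plan to justify this via the averaged-running-cycle version of Little's theorem from the previous lemma is the right way to reconcile it with the paper's conventions. Keep that caveat in the write-up, since it is a genuine modelling choice rather than a bookkeeping slip.
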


For event $B$, following \textbf{Theorem \ref{thm_2}}, $W=\frac{\triangle \gamma_B}{\triangle \alpha_B}$. Here $\alpha_B$ the overall packet in queue number is equivalent to the number of packets that have been departed during an averaged running cycle as the running cycle only stops when all the packets in queue are out of the system. $\triangle \gamma_B = A_s+A_b$ where $A_s$ is the packet in-queue time summation during an averaged vacation cycle and $A_b$ is the packet in-queue time summation during an averaged busy cycle.

The conditional queue length when the system resumes to work is $L^{init}_Q=E(L_Q|N_a>0)=\frac{\sum_{i=1}^K iPinit(i)}{\sum_{i=1}^K Pinit(i)}$, where $N_a$ is the number of arrival.

\begin{algorithm}
\caption{Calculation of $A_s$}
\begin{algorithmic}[1]

\Procedure{CalcAs}{$\lambda,L^{init}_Q$}     
    \State $A_s=0$ , $i=0$ \ and $res=L^{init}_Q$
    \If{$L^{init}_Q<1$}
        \State $A_s=\frac{1}{lambda}*res$
    \Else{
        \While{$i \le L^{init}_Q$}
            \If{$res<1$}
                \State $A_s=A_s+ i*\frac{1}{\lambda}*res$
            \Else
                \State $A_s=A_s+i*\frac{1}{\lambda}$
            \EndIf
            \State $i=i+1$
            \State $res=L^{init}_Q-i$
        \EndWhile
    }
    \EndIf
\EndProcedure
\end{algorithmic}
\end{algorithm}

The conditional queue length when the system is active at departure epoch $k$ is $P_{ql_k}=\frac{\sum_i P^k_{\mathrm{other}}(i)i}{\sum_i P^k_{\mathrm{other}}(i)}$ when $k=0$, $Pqlen=L^{init}_Q$, 
\begin{equation}
\begin{aligned}
A_b= \sum_{i=0}^{K \to \infty} P_0(i)A^i_b
\end{aligned}
\end{equation}
$A^i_b$ is the packet in-queue time summation when the queue becomes empty at the $ith$ departure epoch for an averaged busy cycle.
\begin{equation}
    \begin{aligned}
    A^i_b=\sum_{k=0}^i 0.5*((P_{ql_k}+P_{ql_k}+\frac{\lambda}{\mu})/\mu)
    \end{aligned}
\end{equation}
\subsubsection*{Case Study}
The parameters are set in accordance with Section B - Case Study. The analytical plots -Figure ~\ref{fig:delay} have some discernible discrepancies from the simulation results, esp. for $L_v=16.66667$ when high chance of multiple arrivals within the first single sleep vacation exists. It is not an ideally targeted situation for this bounded multi-vacation policy.
The average error rate $\frac{|VAL_{ana}-VAL_{sim}|}{VAL_{sim}}$ over all the 99 instances is 0.0655 and standard deviation is 0.0483. 
\begin{figure}[!htbp]
\centering
    \includegraphics[width=0.98\linewidth]{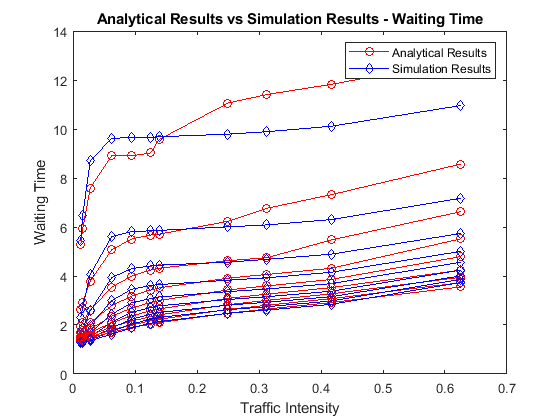}
\caption{Waiting Time Analysis Validation}
\label{fig:delay}
\end{figure}

\subsection{Optimisation Scheme}
The optimisation goal is to select an ideal $(L_v,N_v)$ pair from a feasible pool for a given input traffic rate $\lambda$, given a fixed service rate $\mu$.

\begin{algorithm}
\caption{Search for Optimal Vacation Period and Vacation Maximum Number}
\begin{algorithmic}[1]

\Procedure{OptSearch}{$\lambda,\mu,Dconst$}   \Comment{$Dconst$ is the waiting time bound}    
    \State ${Pool}_{L_v}$,${Pool}_{vnum}$ Initialisation
    \State $minP=1$,$optL_v=0$ and $optVnum=0$
    \While{$Pool_{vnum}$ not exhausted}
        \State $vnumIndex =vnumIndex+1$
      \While{$Pool_{L_v}$ not exhausted}
        \State $L_vIndex = L_vIndex+1$  
        \State $E=Power\_Analysis\_Function$
        \State $W=Waiting_Time\_Analysis\_Function$
        \If{$W<Dconst$}
            \If{$E<minP$}
                \State $minP=E$
                \State optVnum$=Pool_{vnum}[vnumIndex]$
                \State $optL_v=Pool_{L_v}[L_vIndex]$
            \EndIf
        \EndIf
        \EndWhile 
    \EndWhile  
\EndProcedure

\end{algorithmic}
\end{algorithm}
\subsubsection*{Case Study}
With a specific pool of $(L_v,N_v)$,$L_v=[0.2 \  0.5 \  0.8 \  1.1 \  1.6 \  2.1 \  3 \  4 \  6]$ and $N_v=[1 \  2 \  3 \  4\  5 \  6]$, the analytical results are plotted as below in Figure ~\ref{fig:energydelaypair}. $\mu=0.8$ and $\lambda=0.3$.
\begin{figure}[!htbp]
\centering
    \includegraphics[width=0.98\linewidth]{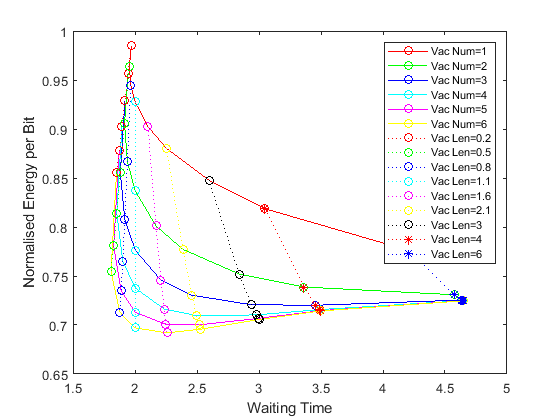}
\caption{Energy-Delay vs ($L_v$, $N_v$)}
\label{fig:energydelaypair}
\end{figure}
The plot is similar to Figure 3 in ~\cite{Kumar} when the vacation number is 1 and the vacation length is a constant. As in paper ~\cite{Kumar}, the traffic rate and service rate have not been mentioned for generating Figure 3, the comparison stops where the plots have similar curves but not exact values. With the increase of the vacation length, the normalised energy per bit decreases while the waiting time increases. And the same rule applies to the change of the vacation number.
Suppose the expected maximum delay is set to $2$, the derived optimal solution is $(0.8\ ,6)$.

To make the results convincing, simulation results are collected and brute force method is used to locate the ground truth and it can show in the figure below that the derived minimum is 2 steps away from the ground truth $(0.8 \ , 3)$ as shown in Figure ~\ref{fig:effectiveness}.The derived solution has a relative error rate of $[0.0299,0.022]$ from the ground truth value in this particular case study.
\begin{figure}[!htbp]
\centering
    \includegraphics[width=0.98\linewidth]{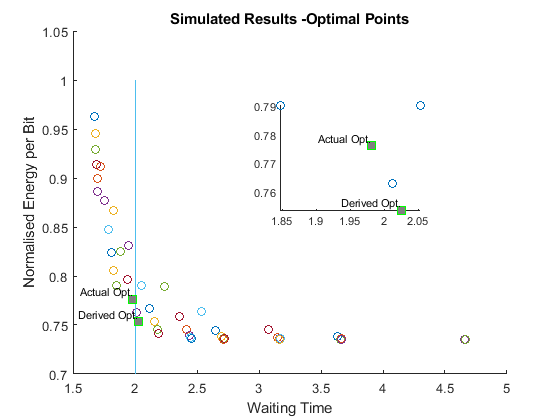}
\caption{Effectiveness of the Derived Solution}
\label{fig:effectiveness}
\end{figure}

\section{4-Stage Sleep Control Performance Evaluation}
Current state of the art of base station energy conservation design supports a vacation amount limit of 4 and each with different vacation length $[0.0000714\ 0.001\ 0.01\ 1]$ \cite{powermodel,5G}. The active power is 234.2 w, power when system is idle is 38.2 w, and power in the four individual sleep mode is $[25.5 w\ 2.9 w\ 2.0 w\ 1.8 w]$

The cases where input rate is solely being increased gradually and where the service rate is solely being increased gradually is investigated separately.
\subsection{Case 1}
From Figure\ref{fig:5GPandDv1} where $\mu=35025$, it can be concluded that with the increasing of traffic rate, the average waiting time is decreased. It is in alignment with Figure 6 in paper \cite{5G} in that with the increase of traffic intensity ($\rho \in [0\ 0.4]$ ), the normalised energy per bit rate increases and the normalised energy saving decreases.
\begin{figure}[!htbp]
\centering
    \includegraphics[width=0.98\linewidth]{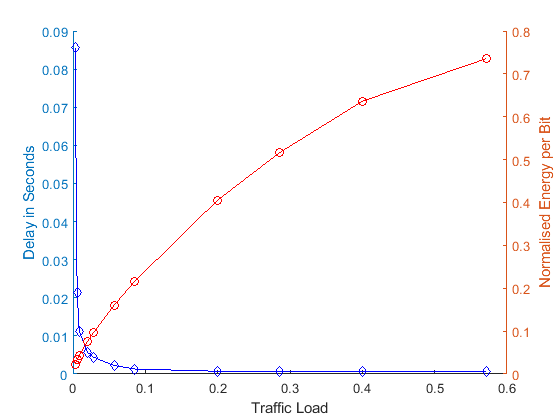}
\caption{System Performance against Input Rate}
\label{fig:5GPandDv1}
\end{figure}
\FloatBarrier
\subsection{Case 2}
Next, scenario where the service rate can be varied is being studied. Assume the system supports dynamic frequency scaling (the system voltage is fixed) where $p=\alpha V^2 f$ with $\alpha$ is a system specific constant. Let the data packet size to be uniform as $U$, which consumes $M$ clock cycles to process in total, then the time spent per packet is $MT_{clock}=\frac{M}{f_{clock}}$. It is inverse to the service rate so $\frac{f_{clock}}{M}=\mu$.That's $f_{clock}=M\mu$. Hence $p=\beta \mu$ with $\beta=\alpha V^2 M$.

For the ease of operation and observation, the configuration from Case 1 is borrowed where $\lambda=2000$,$\mu=35025$ and $p=234.2 w$. Using the formula above, $\beta=0.00668665$. Suppose $\lambda=2000$ is fixed, increase the traffic load $(0.005 \longrightarrow 0.4)$, results show that the energy consumption decreases as the traffic load decreases while the delay increases as the traffic load decreases. The former is reasonable because with less traffic load, the system enters sleep mode more often, the energy consumption is lessened henceforth. The latter is also reasonable as less traffic load means lower service rate and higher inter-departure time.

\begin{figure}[!htbp]
\centering
    \includegraphics[width=0.98\linewidth]{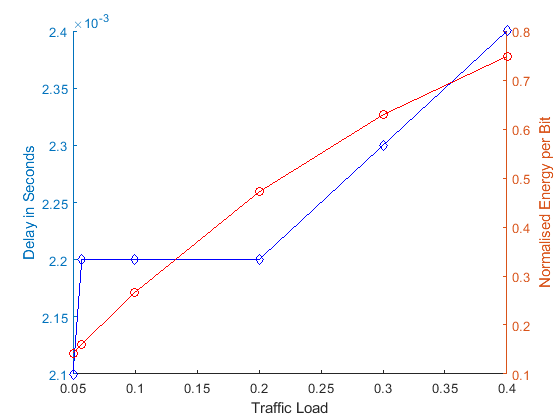}
\caption{System Performance against Service Rate with Input Rate fixed}
\label{fig:5GPandD}
\end{figure}
\section{CONCLUSIONS}
This work firstly discusses the advantage of the newly devised multi-stage sleep mode control for 5G system and then  presents a validated analytical model for it regarding energy efficiency and system delay. Moreover, the optimisation algorithm, as tested, guarantees to produce a solution that is deviated from the ground truth by minute discernible error.Lastly, the model is being applied to a realistic configuration to evaluate the system performance and the potential of selective service rate scheme has been tentatively investigated.
\section{Future Works}
Future works will investigate further into the delay modelling whether the discrepancy matters in practical engineering settings. The author will look into end-to-end delay bounds as specified in next generation mobile network. 
\section{Acknowledgement}
This work was funded by EPSRC grant EP/S00114X/1
\addtolength{\textheight}{-12cm}  



\end{document}